\newtheorem{theorem}{Theorem}
\newtheorem{lemma}{Lemma}
\newtheorem{proposition}{Proposition}
\newtheorem{corollary}{Corollary}
\newtheorem{definition}{Definition}
\def\BibTeX{{\rm B\kern-.05em{\sc i\kern-.025em b}\kern-.08em
    T\kern-.1667em\lower.7ex\hbox{E}\kern-.125emX}}
\begin{document}

\title{MIST: An Efficient Approach for Software-Defined Multicast in Wireless Mesh Networks}

\author{
    \IEEEauthorblockN{Rupei Xu\IEEEauthorrefmark{1}, Yuming Jiang\IEEEauthorrefmark{2}, Jason P.~Jue\IEEEauthorrefmark{1}}
    \IEEEauthorblockA{\IEEEauthorrefmark{1}The University of Texas at Dallas, UAS, Email: \{rupei.xu, jjue\}@utdallas.edu}
    \IEEEauthorblockA{\IEEEauthorrefmark{2}Norwegian University of Science and Technology, Norway, Email: \{yuming.jiang\}@ntnu.no}
}

\maketitle

\begin{abstract}
Multicasting is a vital information dissemination technique in Software-Defined Networking (SDN). With SDN, a multicast service can incorporate network functions implemented at different nodes, which is referred to as software-defined multicast. Emerging ubiquitous wireless networks for 5G and Beyond (B5G) inherently support multicast. However, the broadcast nature of wireless channels, especially in dense deployments, leads to neighborhood interference as a primary system degradation factor, which introduces a new challenge for software-defined multicast in wireless mesh networks. To tackle this, this paper introduces an innovative approach, based on the idea of minimizing both the total length cost of the multicast tree and the interference at the same time.  Accordingly, a novel bicriteria optimization problem is formulated--\emph{Minimum Interference Steiner Tree (MIST)}, which is the edge-weighted variant of the vertex-weighted secluded Steiner tree problem \cite{chechik2013secluded}. To solve the bicriteria problem, instead of resorting to heuristics, this paper employs an innovative approach that is an approximate algorithm for MIST but with guaranteed performance. Specifically, the approach exploits the monotone submodularity property of the interference metric and identifies Pareto optimal solutions for MIST, then converts the problem into the submodular minimization under Steiner tree constraints, and designs a two-stage relaxation algorithm. Simulation results demonstrate and validate the performance of the proposed algorithm. 
\end{abstract}

\begin{IEEEkeywords}
Wireless Mesh Networks, SDN/NFV, Multicast Tree, Steiner Tree, Interference-Aware Networking
\end{IEEEkeywords}

\section{Introduction}
\label{intro}

The evolution of 5G and beyond (B5G) networks has ushered in a diverse range of emerging wireless communication services that are well-suited to support pervasive and ambient application scenarios. Within this dynamic landscape, Software-Defined Networking (SDN) and Network Function Virtualization (NFV) emerge as pivotal enablers. Their integration has the potential to substantially enhance wireless communications, providing advanced service capabilities and increased flexibility. The convergence of SDN and NFV technologies with wireless networks holds tremendous promise for reshaping the future of communication networks. It represents a powerful synergy that can unlock new opportunities and drive innovation in the evolving field of wireless communication. 

SDN has brought significant advancements to various network services, including multicast. Multicast involves the transmission of data from one sender to multiple receivers in the network. The programmability, centralized control, and flexibility of SDN have led to the development of various multicast-based applications such as Content Distribution Networks (CDNs), mobile social networks, Internet Protocol Television (IPTV), etc. As in NFV, with SDN, a multicast service can incorporate network functions (NFs) implemented at different nodes, providing software-defined multicast. 

With the rise of ubiquitous wireless networks for B5G, the inherent broadcast nature of wireless communications, particularly in densely deployed areas, contributes significantly to neighborhood interference, serving as a primary factor leading to network performance degradation. For software-defined multicast in such networks, it is hence imperative to take interface into consideration. 

The objective of this paper is to find an interference-aware multicast tree that can be used to support software-defined multicast in wireless mesh networks (WMN).  Unlike the traditional multicast problem, which has been extensively investigated, the problem of software-defined multicast in WMN has two unique characteristics. One is the interference challenge. Another is that the multicast service involves network functions that are implemented (not at the sender but)  at different nodes in the network. For instance, a wireless network has multiple tenants and a tenant has the need to communicate with the nodes where functions for its service are implemented. A more specific example is a wireless sensor network. A tenant or user may need to send and collect information from multiple sensor nodes in the network.

In the literature, SDN multicasting has been extensively investigated \cite{alsaeed2018multicasting}, though the surveyed studies typically assume no interference among vertices in the network. Recently, \cite{mirjalily2021interference} has filled this gap by opening an interesting new research direction: interference-aware NFV-enabled multicast in WMN. However, its focus is on how to optimally place the virtualized network functions (VNFs) in the network such that the considered total multicast cost in terms of installation, computing, storage, and resource consumption is minimized. Despite the effort of the numerical approximate heuristic approach proposed in \cite{mirjalily2021interference} to solve the optimization problem, the theoretical performance of the approximation remains unexplored.  

In this paper, a novel combinatorial approach is introduced to tackle the problem of interference-aware software-defined multicast in WMN. 
Specifically, it is formulated as a bicriteria optimization problem, called \emph{Minimum Interference Steiner Tree (MIST)}. In MIST, a software-defined multicast service with functions implemented at different nodes is mapped to a Steiner tree where the nodes implementing those NFs are terminals of the Steiner tree. In addition, this multicast Steiner tree is minimized by minimizing the interference at the same time. The approach exploits the monotone submodularity property of the interference metric and identifies Pareto optimal solutions for MIST, then converts the problem into the submodular minimization under Steiner tree constraints, and designs a two-stage relaxation algorithm.

The key contributions of this paper can be summarized as follows:

\begin{itemize}
	\item \emph{Incorporation of interference metric into software-defined multicast in wireless mesh networks.}
	\item \emph{A novel framework for bicriteria optimization.}
	\item \emph{Monotone submodularity proof of the interference metric and Pareto optimal solution characteristics using vicinal preorder.}
	\item \emph{Theoretically proven performance-guaranteed algorithm design.}
\end{itemize}

The remainder of this paper is structured as follows. Section \ref{related} provides a review of related works in the field. Following that, in Section \ref{sys-formula},  the system model and the problem formulation as a bicriteria optimization problem are introduced. The proposed algorithmic approaches are discussed in Section \ref{algo}. Simulation results are presented in Section \ref{simulation}, and finally, Section \ref{con} concludes the paper.

\section{Related Works}
\label{related}

In the realm of Software-Defined Networking (SDN) multicasting, several survey papers have made significant contributions to the understanding of this technology. \cite{nunes2014survey} provides a comprehensive overview of SDN and its applications, including multicast, offering valuable insights into the historical context and future directions of SDN technology. Furthermore, \cite{gu2015survey} and \cite{alsaeed2018multicasting} conduct in-depth surveys specifically focused on SDN multicasting, providing comprehensive insights into the state of the field and its advancements.

The interference metric introduced in this paper can be likened to the \emph{Minimum $k$-Union Problem} \cite{chlamtavc2017minimizing} or the \emph{Minimum Neighborhood Problem} \cite{kabra2018parameterized}. These problems, while intuitive and natural, have indeed received relatively limited attention in the existing literature.

The \emph{Secluded Steiner Tree} problem \cite{chechik2013secluded}\cite{fomin2017parameterized} is indeed closely related to the problem addressed in this paper. The secluded Steiner tree connects a given set of terminals in a graph, such that the exposure (number of closed neighborhood) of the tree is minimized. However, in this paper, we consider the edge weight cost of the multicast tree, instead of the vertex cost in for the weighted version in \cite{chechik2013secluded}\cite{fomin2017parameterized}. Moreover, the unweighted secluded Steiner tree in \cite{chechik2013secluded} can be considered a minimum submodular Steiner tree as described in \cite{goel2009approximability}, since the closed neighborhood is proven to be monotone submodular in this paper.

As of now, \cite{mirjalily2021interference} stands among the few papers that have successfully integrated interference metrics into the multicast tree within SDN-enabled wireless mesh networks, based on integer linear programming approaches. Our paper adopts the same interference metric. However, it's important to note that our paper does not delve into multicast link scheduling issues, as explored in \cite{mirjalily2021interference}, to create a more versatile and comprehensive application. Since more advanced techniques are capable of managing simultaneous transmission collisions, which could further enhance the performance and applicability of multicast communication in complex network scenarios. What makes our paper particularly noteworthy is that, for the first time, it offers a more detailed exploration of the underlying combinatorial properties and presents a more rigorous algorithm with performance guarantees. 

\section{System Model and Problem Formulation}
\label{sys-formula}

\subsection{System Model}
\label{sys}
We consider a wireless mesh network of nodes or vertices, connected by links or edges. Among the nodes, there is a subset of nodes, referred to as terminals. In these terminal nodes, some network functions needed by multicast services are implemented. To ease representation and simplify the problem, we assume each terminal node performs only one unique function. For a multicast service, we assume that it involves such functions implemented at different terminal nodes. 

Before presenting the problem formulation, let's introduce two fundamental graph theory concepts using appropriate terminology. Given an undirected graph $G=(V, E)$ with non-negative edge weights $L$ and a subset of vertices $S\subseteq V,$ referred to as terminals, \emph{a Steiner tree} is a tree in $G$ that spans the set of terminals $S.$ In the graph $G=(V,E),$ for each vertex $v$ in the set $V,$ we define the set $N(v)\cup {v}$ as $N[v]$ representing the set of \emph{closed neighbors} of the vertex $v.$ In $G=(V,E),$ for $V'\subseteq V,$ we define $N[V']=\bigcup_{v\in V'} N[v],$ representing the set of \emph{closed neighbors} of the vertex set $V'\subseteq V.$

The wireless mesh network with a multicast request is modeled as $G=(r, V, E, F, L)$, where $V: |V|=n$ denotes the set of $n$ nodes or vertices, $E: |E|=m$, denotes the set of $m$ links or edges, $F_{v\in V}(v)\in [F_1, F_2, ..., F_n]$ denotes the network function on vertex $v$, $L: E\rightarrow \mathbb{R}^+$ denotes the lengths on edges, and $r$ is the sender or root of the multicast service. The set of receivers, or target terminals, of the multicast service is denoted as $S.$ A multicast request may involve multiple network functions and is denoted as $R=[\mathcal{F}_1, \mathcal{F}_2,...,\mathcal{F}_k]$. We use $V_R$ to denote the vertex set containing the virtual functions mapped from $R,$ and $v_{\mathcal{F}_j}$ to denote the vertex containing the virtual function $\mathcal{F}_j.$

\subsection{Problem Formulation}
\label{formula}

The primary objective is to construct a multicast tree that facilitates the establishment of connections between the source node $r$ and the target terminals $V_T,$ e.g. via SDN \cite{alsaeed2018multicasting}. 
However, unlike the traditional multicast problem, the multicast problem considered in this paper has two unique characteristics. One is that, the multicast tree must contain all nodes or vertices where the corresponding network functions are implemented. Accordingly, the multicast tree is a Steiner tree. In addition, due to the wireless nature, there is interference among adjacent links. Together, in constructing the multicast tree, we aim to simultaneously optimize two factors and will call the resulting problem the \emph{Minimum Interference Steiner Tree (MIST)} problem. The \emph{Minimum Interference Steiner Tree} problem in a wireless mesh network aims to establish a multicast tree (Steiner tree) efficiently connecting the source and target terminals while minimizing both the total communication length cost and the number of nodes experiencing interference in the tree's vicinity. 

\begin{algorithm}[H]
	\caption*{\textbf{Minimum Interference Steiner Tree (MIST)}}
		\renewcommand{\algorithmicrequire}{\textbf{INSTANCE:~}}
		\renewcommand{\algorithmicensure}{\textbf{SOLUTION:~}}
        \newcommand{\algorithmiccost}{\textbf{COST FUNCTION:~}}
        \newcommand{\algorithmicobjective}{\textbf{OBJECTIVE:~}}
		\algorithmicrequire 
		Wireless Mesh Network $G=(r, V, E, F, L)$ consisting of a root $r;$ a set of vertices $V: |V|=n;$ a set of edges $E:|E|=m;$ a unique virtual function $F_{v\in V}(v)\in [F_1, F_2, ..., F_n]$ on each vertex; latencies $L: E\rightarrow \mathbb{R}^+$ on edges. Multicast request $R=[\mathcal{F}_1, \mathcal{F}_2,...,\mathcal{F}_k].$
        \newline
        \newline
		\algorithmicensure
		A Steiner tree $T=(V_T, E_T)$ rooted at $r$, such that $r\cup V_R=S\subseteq V_T\subseteq V.$
		\newline
        \newline
        \algorithmiccost $(\sum_{e_i\in E_T} L(e_i),|N[V_T]|)$
		\newline
        \newline
        \algorithmicobjective Minimize, Minimize
\end{algorithm}

The first objective function bears a close relationship to the extensively studied Minimum Steiner Tree Problem, a classic NP-complete problem as identified by Karp \cite{Karp1972}. The subsequent two NP-complete problems \emph{Minimum $k$-Union Problem} and \emph{Minimum Neighborhood Problem} introduced in \cite{chlamtavc2017minimizing} and \cite{kabra2018parameterized} are intricately linked to the interference metric of the second objective function presented in this paper. 

\section{Algorithm Design}
\label{algo}

To solve the bicriteria  MIST problem, this section introduces the design of an algorithm. To this aim, we first explore the monotone submodularity properties of the interference metric and vicinal preorder properties of Pareto optimal solutions for MIST, based on which a two-stage relaxation algorithm with guaranteed performance is proposed. 

\subsection{Monotone Submodularity of the Interference Metric}

Let $\mathcal{U}$ be a universe set of $n$ elements. A function $f: 2^\mathcal{U} \rightarrow \mathbb{R},$ $f$ is monotone, if for any $\mathcal{C} \subseteq \mathcal{U}$ and $j \in \mathcal{U}$, $f(\mathcal{C} \cup {j}) \geq f(\mathcal{C}).$ This property ensures that adding an element to a subset can only increase the function value. Additionally, $f$ is defined as submodular if it satisfies the diminishing marginal returns property: for every $\mathcal{A} \subseteq \mathcal{B} \subseteq \mathcal{U}$ and $j \notin \mathcal{B},$ $f(\mathcal{A} \cup {j})-f(\mathcal{A}) \geq f(\mathcal{B}\cup {j})-f(\mathcal{B}).$ This property indicates that the marginal gain of adding an element to a smaller set is greater than or equal to adding it to a larger set. 

Given a graph $G=(V, E),$ for $v\in V,$ let $N(v)\cup {v}=N[v]$ be the set of closed neighbors of $v.$ The set coverage function, in general, does not exhibit a monotone behavior, as the addition of elements to a set may or may not lead to an increase in the function value. Moreover, it is not inherently submodular. However, it's worth noting that there are special cases where certain functions, such as the graph closed neighborhood function for the considered interference metric in the MIST problem, demonstrate monotonicity and submodularity properties as shown in Lemma \ref{sub}.

\begin{lemma}
	\label{sub}
	Given graph $G(V,E),$  $|N[V'\subseteq V]|=|\bigcup_{v\in V'}N[v]|$ is monotone submodular. 
\end{lemma}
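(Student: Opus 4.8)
The plan is to verify the two defining properties directly, using the observation that $f(V') := |N[V']| = |\bigcup_{v\in V'} N[v]|$ is a coverage function on the ground set $V$: each vertex $v$ "covers" the elements of its closed neighborhood $N[v]$, and $f(V')$ counts the vertices covered by at least one member of $V'$. Both monotonicity and submodularity of such coverage functions are standard, but I would give the short self-contained argument.

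First I would establish monotonicity. For any $V'\subseteq V$ and any $j\in V$, we have $N[V'\cup\{j\}] = N[V'] \cup N[j] \supseteq N[V']$, so taking cardinalities yields $f(V'\cup\{j\}) \geq f(V')$, and iterating gives $f(\mathcal{A})\leq f(\mathcal{B})$ whenever $\mathcal{A}\subseteq\mathcal{B}$. Note also $f(\emptyset)=0$, so $f$ is in fact normalized, which is the form needed by the greedy/relaxation guarantees invoked later.

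Next, for submodularity I would compute the marginal gain of adding one vertex: for $V'\subseteq V$ and $j\notin V'$, the set $N[V'\cup\{j\}]$ differs from $N[V']$ exactly by the closed neighbors of $j$ that are not already covered, so $f(V'\cup\{j\}) - f(V') = |N[j]\setminus N[V']|$. Now take $\mathcal{A}\subseteq\mathcal{B}\subseteq V$ and $j\notin\mathcal{B}$. Since $\mathcal{A}\subseteq\mathcal{B}$ implies $N[\mathcal{A}]\subseteq N[\mathcal{B}]$ (by the monotonicity of $N[\cdot]$ just used), we get $N[j]\setminus N[\mathcal{B}] \subseteq N[j]\setminus N[\mathcal{A}]$, hence $|N[j]\setminus N[\mathcal{A}]| \geq |N[j]\setminus N[\mathcal{B}]|$, i.e. $f(\mathcal{A}\cup\{j\}) - f(\mathcal{A}) \geq f(\mathcal{B}\cup\{j\}) - f(\mathcal{B})$, the diminishing-returns form of submodularity. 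Equivalently, one could prove the lattice inequality $f(\mathcal{A})+f(\mathcal{B}) \geq f(\mathcal{A}\cup\mathcal{B}) + f(\mathcal{A}\cap\mathcal{B})$ by a direct inclusion--exclusion count over the covered vertices, but the marginal-gain route is shorter and is the characterization actually used in the algorithm.

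As for difficulty, there is essentially no deep obstacle here: the statement is the textbook fact that set-coverage functions are monotone submodular, specialized to closed-neighborhood covers. The only points requiring care are bookkeeping ones, namely writing the single-element marginal gain correctly as $|N[j]\setminus N[V']|$ and observing that $N[\cdot]$ is itself monotone under inclusion, which is precisely what powers the inequality; the presence of $v$ itself in $N[v]$ does not affect any of these arguments.
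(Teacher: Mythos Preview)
Your proof is correct and follows essentially the same route as the paper: both establish monotonicity from $N[\mathcal{A}]\subseteq N[\mathcal{B}]$ when $\mathcal{A}\subseteq\mathcal{B}$, and both verify the diminishing-returns inequality by computing the single-element marginal gain. Your expression $|N[j]\setminus N[V']|$ is exactly the paper's inclusion--exclusion count $|N[v]|-|N[A]\cap N[v]|$ rewritten, so the two arguments coincide.
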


\begin{proof}
Firstly, the monotonicity holds because the closed neighborhood only expands when more nodes are added. For any $A\subseteq B\subseteq V,$ the closed neighborhood of $A$ is a subset of the closed neighborhood of B, thus $|N[A]|\leq |N[B]|.$ 

Secondly, the submodularity property holds as well, for any $A\subseteq B\subseteq V,$ and let $v\notin B,$ adding $v$ to $A$ will result in a larger marginal gain compared to adding $v$ to $B.$ Since $|N[A]|\leq |N[B]|$ and $|N[A]\cap N[v]|\leq |N[B]\cap N[v]|,$ in addition, $N[A\cup \{v\}]=N[A]+N[v]-N[A]\cap N[v]$ and $N[B\cup \{v\}]=N[B]+N[v]-N[A]\cap N[v],$ therefore, for any $v\notin B,$ $|N[A\cup \{v\}]|-|N[A]|\geq |N[B\cup \{v\}]|-|N[B]|.$

Therefore, the graph closed neighborhood function satisfies both monotonicity and submodularity properties.
\end{proof}

In typical graph optimization problems, distinctions between edge-weighted and node-weighted scenarios often lead to different performance outcomes. However, with the closed neighborhood function, it can obtain identical sets of vertices within both the edge and vertex domains.

For each edge, its closed neighborhoods are the union of the closed neighborhoods of its two end vertices: $N[e]=\cup_{v\in V(e)} N[v].$ Note that in graph $G=(V, E),$ for any tree $T=(V_T, E_T)$ which spans the terminal set $S\subseteq V,$ we can get $N[V_T]= \cup_{v\in V_T} N[v] = \cup_{v\in V(E_T)} N[v] =\cup_{e\in E_T} N[e]=N[E_T].$ Next, we will present a rigorous proof demonstrating that the closed neighborhood function on the edge domain of the graph exhibits monotone submodularity, analogous to the proof provided for the vertex domain of the graph in Lemma \ref{sub}.

\begin{lemma}
\label{mist-edge-sub}

Given graph $G(V,E),$ for any tree $T=(V_T, E_T)$ which spans the terminal set $S\subseteq V,$ $|N[E_T]|=|\bigcup_{e\in E_T}N[e]|$ is monotone submodular. 

\end{lemma}

\begin{proof}
For any $E_1\subseteq E_2\subseteq E_T,$ we have $|N[E_1]|\leq |N[E_2]|,$ thus the monotonicity property holds. 

For any $E_1\subseteq E_2\subseteq E_T,$ let $e'\in E_T-E_2.$  When $e' \in E_1,$ the marginal gain $|N[E_1\cup e']|-|N[E_1]|$ is zero. When $e' \notin E_1,$ since $E_1 \subseteq E_2$, the set $\left(\bigcup_{e\in E_1} N[e]\right)$ is a subset of $\left(\bigcup_{e\in E_2} N[e]\right)$. Therefore, adding edge $e'$ to $E_2$ will not introduce any new closed neighborhoods from the edges in $E_1$, meaning that the marginal gain from adding $e'$ to $E_2$ is at most the marginal gain from adding $e'$ to $E_1$. Thus, we have shown that the number of closed neighborhoods of edges in a graph is monotone submodular.
\end{proof}

\subsection{Vicinal Preorder Properties of the Pareto Optimal Solutions}

The \emph{vicinal preorder} \cite{foldes1978dilworth} on $V(G)$ in graph $G(V, E)$ is defined as follows: $u \lesssim v$ if and only if $N(u)\subseteq N(v)\cup {v},$ where $N(v)$ is the set of neighbors of $v.$ $N(v)\cup {v}$ is the set of closed neighbors of $v,$ which can also be represented as $N[v].$ If neither $u \lesssim v$ nor $v \lesssim u,$ we refer to $u$ and $v$ as non-comparable, which is represented as $u \shortparallel v.$

\begin{proposition}
	\label{vicinal}
	Assume that tree $T^*=(V_T^*, E_T^*)$ is the Pareto optimal solution to the Minimum Interference Steiner Tree problem, if $v\notin N[V_T^*],$ and $\forall v'\in V_T^*,$ we can obtain either $v \lesssim v'$ or $v \shortparallel v'.$ 
\end{proposition}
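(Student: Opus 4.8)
The plan is to reduce the whole statement to the single assertion that \emph{no} $v'\in V_T^*$ satisfies $v'\lesssim v$. To see this reduction, recall that $v\shortparallel v'$ is by definition the negation of ``$v\lesssim v'$ or $v'\lesssim v$''. Hence the negation of the desired conclusion ``$v\lesssim v'$ or $v\shortparallel v'$'' is exactly ``$v'\lesssim v$ and $v\not\lesssim v'$''. So it is enough to prove $v'\not\lesssim v$ for every $v'\in V_T^*$; this in fact gives slightly more than asked (it also rules out the mutually-comparable case), but it is the cleanest thing to establish.

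The key structural fact I would use is that membership of $v'$ in the tree forces $v'$ to have a neighbour that also lies in $V_T^*$. Since $T^*=(V_T^*,E_T^*)$ is a tree spanning the terminal set $S$, and a genuine multicast request gives $|S|\ge 2$ (the root plus at least one receiver), we have $|V_T^*|\ge 2$, so $v'$ is incident to at least one tree edge $\{u,v'\}\in E_T^*\subseteq E$; thus $u\in V_T^*$ and $u\in N(v')$. I would also record at this point that $v\notin V_T^*$, because $V_T^*\subseteq N[V_T^*]$ while $v\notin N[V_T^*]$ by hypothesis; in particular $u\neq v$.

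Now suppose toward a contradiction that $v'\lesssim v$, i.e.\ $N(v')\subseteq N[v]=N(v)\cup\{v\}$. Applying this containment to the tree-neighbour $u\in N(v')$ gives $u\in N(v)\cup\{v\}$, and since $u\neq v$ we obtain $u\in N(v)$, i.e.\ $u$ and $v$ are adjacent. By symmetry of adjacency this means $v\in N(u)$, and since $u\in V_T^*$ we get $v\in N(u)\subseteq N[u]\subseteq N[V_T^*]$, contradicting $v\notin N[V_T^*]$. Therefore $v'\not\lesssim v$ for all $v'\in V_T^*$, and combined with the reduction of the first paragraph this yields $v\lesssim v'$ or $v\shortparallel v'$, as claimed.

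I do not anticipate a genuine obstacle here; the only points needing care are (i) the degenerate case $|V_T^*|=1$, which I would dispatch by the non-triviality of the multicast request as above, so that every tree vertex genuinely has a tree-neighbour, and (ii) stating the preorder case analysis precisely, in particular using that $x\in N(y)\iff y\in N(x)$ (which also covers the directed reading, where $N$ collects both in- and out-neighbours). It is worth noting that the argument only invokes $v\notin N[V_T^*]$ together with the connectedness of $T^*$; Pareto optimality and the submodularity of Lemma~\ref{sub} are not needed for this particular proposition, although they are essential for the algorithmic guarantees that follow.
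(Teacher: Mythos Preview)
Your proof is correct and follows essentially the same approach as the paper: both reduce to showing that $v'\lesssim v$ is impossible for $v'\in V_T^*$, and both derive the contradiction $v\in N[V_T^*]$ from the containment $N(v')\subseteq N[v]$. Your execution is in fact tidier than the paper's: the paper splits into the cases $v\in N(v')$ versus $v\notin N(v')$, whereas you go directly through a tree-neighbour $u\in N(v')\cap V_T^*$, which makes the contradiction transparent and also handles the degenerate $|V_T^*|=1$ case explicitly; your observation that Pareto optimality is not actually invoked is likewise accurate.
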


\begin{proof}
	$\forall v'\in V_T^*,$ and $v\notin N[V_T^*],$ if $v' \lesssim v,$ according to the definition ofvicinal preorder, one can obtain $N(v')\subseteq N(v)\cup {v}.$ Therefore, if $v\in N(v'),$ then $N(v')-{v}\subseteq N(v),$ it contradicts $v\notin N[V_T^*];$ if $v\notin N(v'),$ then $N(v')\subseteq N(v),$ which also contradicts $v\notin N[V_T^*].$
\end{proof}

\begin{proposition}
	\label{leaf}
	Assume that tree $T^*=(V_T^*, E_T^*)$ is the Pareto optimal solution to the Minimum Interference Steiner Tree problem. In $V_T^*-r,$ the vertices of all the leaves are in the terminal set. 
\end{proposition}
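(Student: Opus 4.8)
The plan is to argue by contradiction: suppose $T^*$ is Pareto optimal but some leaf $\ell \in V_T^* \setminus \{r\}$ is not a terminal, i.e., $\ell \notin S = \{r\} \cup R$. The idea is to delete $\ell$ from the tree and show that the resulting tree $T' = T^* - \ell$ is still a feasible Steiner tree and is no worse in *both* objectives — with strict improvement in at least one — contradicting Pareto optimality.

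First I would verify feasibility of $T'$. Since $\ell$ is a leaf, removing it together with its unique incident edge $e$ leaves a tree (connectivity and acyclicity are preserved). Because $\ell \notin S$, every required node (the root and all nodes realizing the requested functions $\mathcal{F}_1, \dots, \mathcal{F}_k$) still lies in $V_{T'}$, so $S \subseteq V_{T'}$ and the function-coverage condition $F(S) = F(r) \cup [\mathcal{F}_1,\dots,\mathcal{F}_k]$ still holds. Hence $T'$ is a valid solution to MIST.

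Next I would compare the two cost components. For the length cost, $\sum_{e_i \in E_{T'}} L(e_i) = \sum_{e_i \in E_{T^*}} L(e_i) - L(e)$, and since edge weights are in $\mathbb{R}^+$ (strictly positive), this is a strict decrease. For the interference cost, $V_{T'} \subseteq V_{T^*}$, so by the monotonicity established in Lemma \ref{sub}, $|N[V_{T'}]| \le |N[V_{T^*}]|$. Therefore $T'$ dominates $T^*$ — strictly better in length, no worse in interference — which contradicts the assumption that $T^*$ is Pareto optimal. Consequently every leaf of $V_T^* \setminus \{r\}$ must belong to the terminal set $S$.

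The only real subtlety — and the step I would be most careful about — is the feasibility check: one must be sure that a non-terminal leaf is genuinely expendable, i.e., that it is not needed as a Steiner point on a path between two terminals. This is automatic precisely because $\ell$ is a \emph{leaf}: it has degree one, so no terminal-to-terminal path in $T^*$ routes through it, and its removal cannot disconnect any pair of retained vertices. A minor point worth stating explicitly is that the strict positivity of $L$ is what upgrades "not worse" to "strictly dominated," which is what Pareto optimality requires to derive the contradiction; if zero-length edges were allowed one would instead conclude only that there exists a Pareto optimal solution all of whose non-root leaves are terminals.
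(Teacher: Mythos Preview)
Your proof is correct and follows essentially the same approach as the paper: assume a non-terminal leaf exists, delete it, observe that the length strictly drops while the interference cannot increase, and contradict Pareto optimality. The only cosmetic difference is that you invoke Lemma~\ref{sub} (monotonicity) to handle the interference objective, whereas the paper argues it directly via a small case split on whether the deleted vertex's neighborhood was already covered; your version is in fact the more carefully written of the two.
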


\begin{proof}
	Assume $v'\in V_T^*-r$ and it is also the leaf vertex of $T^*.$ If it is not in terminal sets, removing it can reduce the total latency of $T^*;$ moreover, by removing it, either $|N[V_T^*]|$ decreases if $N(v')\cap N[V_T^*-r-{v'}]\neq\emptyset$, or $|N[V_T^*]|$ remains the same if $N(v')\cap N[V_T^*-r-{v'}]=\emptyset$. As a result, it contradicts the fact that $T^*$ is the Pareto optimal solution.
\end{proof}

\subsection{Monotone Submodular Minimization under Combinatorial Constraints}

Submodular Function Minimization (SFM) has been an important problem in combinatorial optimization. The unconstrained problem of SFM is known to be computable in polynomial time \cite{grotschel1981ellipsoid} \cite{cunningham1985submodular}. However, SFM with constraints poses a formidable challenge in the field of combinatorial optimization. 

To better address the unique characteristics of our problem, this paper introduces the novel reformulation \emph{Closed Neighborhood Minimum Steiner Tree}, defined as follows.

\begin{definition}\textbf{(Closed Neighborhood Minimum Steiner Tree )} Given a connected undirected graph $G = (V, E)$ with a terminal set $S \subseteq V$, and a Steiner Tree (ST) set $\mathcal{ST}$ on $G,$ the closed neighborhood function $N: 2^{V} \rightarrow \mathbb{R}^+,$ find $A \in \mathcal{ST}$ that minimizes $N[V(A)].$
\end{definition}

Next, we will proceed with the relaxation approach following the formulations of two single-objective problems, namely \emph{Closed Neighborhood Minimum Spanning Tree} and \emph{Closed Neighborhood $s{\text-}t$ Shortest Path}. These two problems can be seen as \emph{(Monotone) Submodular Minimum Spanning Tree} and \emph{(Monotone) Submodular $s{\text-}t$ Shortest Path}, as outlined in \cite{goel2009approximability}.

\begin{definition} \textbf{Closed Neighborhood Minimum Spanning Tree)} Given a connected undirected graph $G = (V, E),$ a Spanning Tree (SPT) set $\mathcal{SPT}$ on $G,$ find $A \in \mathcal{SPT}$ that minimizes $N[E(A)].$
\end{definition}

\begin{definition} \textbf{(Closed Neighborhood $s{\text-}t$ Shortest Path)}
Given a connected undirected graph $G = (V, E),$ a pair of vertices $s,t\in V,$ and a set of $s{\text-}t$ paths $st{\text-}\mathcal{P}$ on $G,$ find $A \in st{\text-}\mathcal{P}$ that minimizes $N[E(A)].$
\end{definition}

\subsection{Pairwise Closed Neighborhood Shortest Path Tree}
\label{pairwise}

Next we show the approximation performance of the structure applied in the algorithm design--\emph{Pairwise Closed Neighborhood Shortest Path Tree}. 

\begin{definition} \textbf{(Pairwise Closed Neighborhood Shortest Path Tree)}
Given a connected undirected graph $G = (V, E)$ and a terminal set $S \subset V$, construct another graph $G'$ whose vertex set is $S$. In $G',$ the edge weight is determined by the closed neighborhood shortest path between its two end vertices in the original graph. The pairwise closed neighborhood shortest path tree is the tree on $G'$ with the minimum total number of closed neighborhoods on its edges.
\end{definition}

\begin{lemma} \cite{goel2009approximability}
\label{mist-submodular_path}
There exists a randomized algorithm that finds an $O(n^{\frac{2}{3}})$-approximate solution to the $s{\text-}t$ shortest path problem with submodular edge costs in polynomial time. 
\end{lemma}

\begin{lemma} \cite{goel2009approximability}
\label{mist-submodular_tree}
There exists an algorithm that finds an $n$-approximate solution for the submodular minimum spanning problem in polynomial time.
\end{lemma}

\begin{lemma} \cite{yao1977probabilistic} (\textbf{Yao's Minimax Principle})
\label{mist-yao}
For any randomized algorithm, there exists a probability distribution on inputs to the algorithm, so that the expected cost of the randomized algorithm on its worst-case input is at least as large as the cost of the best deterministic algorithm on a random input from this distribution. 
\end{lemma}

\begin{theorem}
\label{c-spt}
Given a submodular edge-weighted graph $G=(V, E)$ and a terminal set $S\subseteq V,$ there exists an algorithm that finds an $|S|\cdot O(n^{2/3})$-approximate solution for the pairwise submodular shortest path tree problem in polynomial time. 
\end{theorem}

\begin{proof}
For a given submodular edge-weighted graph $G=(V, E)$ and terminal set $S\subseteq V,$ constuct another graph $G'$ in the following way: start with the terminal vertex set $S$ from $G$, for each pair of vertices within this set, determine the submodular shortest path between them in $G$. In $G'$, the vertices remain the same as those in the terminal vertex set of $G$, but the edge weights are assigned based on the submodular shortest paths of their corresponding vertices in $G$. 

In graph $G,$ for any pair of vertices $u, v$ in a given terminal set $S\subseteq V,$ according to Lemma \ref{mist-submodular_path}, there exists a randomized algorithm that finds an $O(n^{\frac{2}{3}})$-approximate solution to the $u{\text-}v$ shortest path problem with submodular edge costs. Based on Lemma \ref{mist-submodular_tree}, there exists an algorithm that finds an $|S|$-approximate solution for the submodular minimum spanning problem in polynomial time, for the given graph $G'$ with $|S|$ vertices, if the submodular edge weights are known. Based on the construction above, each edge weight of $G'$ is still submodular since the class of submodular functions is closed under non-negative linear combinations \cite{fujishige2005submodular}, and it is $O(n^{\frac{2}{3}})$-approximate. Therefore there exists an algorithm that finds an $|S|\cdot O(n^{2/3})$-approximate solution for the pairwise submodular shortest path tree problem. Since $|S|\leq n,$ $|S|\cdot O(n^{2/3})$ can also be represented as $O(n^{5/3}).$ 

Additionally, according to Yao's Minimax Principle, the aforementioned randomized algorithms can be derandomized without sacrificing their quality.
\end{proof}

In Lemma \ref{mist-edge-sub}, when the terminal set only contains two vertices $S=\{s, t\}$ the above Lemma still holds, therefore, it applies to any path too. We can get the following Corollary for the \emph{Pairwise Closed-Neighborhood Shortest Path Tree}. 

\begin{corollary}
Given an edge-weighted graph $G=(V, E)$ and a terminal set $S\subseteq V,$ there exists an algorithm that finds an $|S|\cdot O(n^{2/3})$-approximate solution for the Pairwise Closed-Neighborhood Shortest Path Tree problem. 
\end{corollary}

\subsection{Two-Stage Submodular Relaxation Algorithm}

To integrate the two objectives, let's start by defining the data structure of the algorithm. 

In a wireless mesh network represented by $G=(r, V, E, F, L)$, where the latency matrix is denoted as $\mathcal{M}_{n\times n}$ with $\mathcal{M}_{ij}=L(e_{ij})$ if $e_{ij}$ exists in $G$, and $\mathcal{M}_{ij}=\infty$ otherwise, $i,j=0,...,n-1$; and the edge closed neighborhood matrix with set elements is $\mathcal{N}_{n\times n}$ with $\mathcal{N}_{ij}=N[v_i]\cup N[v_j]$ if $e_{ij}$ exists in $G$, and $\mathcal{N}_{ij}=\emptyset$ otherwise, $i,j=0,...,n-1$.

In $G=(V, E)$ with the terminal set $S\subseteq V,$ the sets of Steiner Trees, Spanning Trees, and $s{\text-}t$ Shortest Paths exhibit an exponential size. However, this paper employs the \emph{Value Oracle} model, the same as in \cite{goel2009approximability}, in which, an oracle is utilized to provide the value of the submodular function $f(A)$ in response to queries involving the set $A\subseteq \mathcal{U},$ where $\mathcal{U}$ is the universe set.

In multicriteria optimization, managing multiple, potentially conflicting objectives can indeed be challenging. One natural approach to address this complexity is to prioritize one objective as the primary focus while treating the others as budgeted constraints. This allows for a more manageable optimization process where the primary objective is optimized subject to the constraints imposed by the secondary objectives. The following paper \cite{grandoni2014new} has discussions of this technique. 

In this reformulation, the problem MIST is transformed into a single-objective optimization problem by converting one of the objective functions into a constraint with a specified budget. The primary objective is to minimize the total number of closed neighborhoods in the Steiner tree. The Steiner tree is constructed with the terminal vertex set as the required vertices, comprising the union of the root vertex and the vertices hosting the required virtual functions. Additionally, the total latency length of the Steiner tree is constrained by a budget. 

With the properties of the MIST problem investigated above, a two-stage submodular relaxation algorithm has been designed. In the first stage, we compute the all pairs shortest paths (APSP) of vertices in $G$, then utilize the subroutine of \emph{Closed Neighborhood $i{\text-}j$ Shortest Path} for pairs of terminal vertices, where the terminals consist of the union of the root vertex and the vertices hosting the required virtual functions. This subroutine also ensures that the total latency length between the terminals does not exceed a specified constant times the length of the shortest path between them. In the second stage, we employ the \emph{Pairwise Closed Neighborhood Minimum Spanning Tree}, whose construction has been detailed in Section \ref{pairwise}.

\begin{algorithm}[H]
\caption*{\textbf{Two-Stage Submodular Relaxation Algorithm (TSSR)}}\label{alg:tssr}
\begin{algorithmic}[1]
\Require Latency matrix $\mathcal{M}_{n\times n}$ of Wireless Mesh Network $G=(r, V, E, F, L);$ multicast request $R=[\mathcal{F}_1, \mathcal{F}_2,...,\mathcal{F}_k].$
\Ensure A Steiner tree $T=(V_T, E_T)$ rooted at $r$, such that $r\cup V_R=S\subseteq V_T\subseteq V.$
\State $T=\emptyset$ 
\State $L(T)=0$
\State $N[T]=\emptyset$
\State Get the All Pairs Shortes Paths (APSP) matrix  on latency matrix $\mathcal{M}_{n\times n}$ of $G:$ $\mathcal{M}^{APSP}$
\For {$i = 1$ to $|r\cup V_R|$}
		\For {$j = i+1$ to $|r\cup V_R|$}
        \While{$L(P_{ij})\leq \alpha\mathcal{M}^{APSP}_{ij},~where~ \alpha\geq 1,$}
		\State Execute \emph{\textbf{Closed Neighborhood $i{\text-}j$ Shortest Path}} on $G,$ obtain $N[P_{ij}]$
        \EndWhile
        \State $N[P_{ij}^{*}]=\arg\min|N[P_{ij}]|$
        \State $L(P_{ij}^{*})=L(\arg\min|N[P_{ij}]|)$
		\EndFor
\EndFor
\State Construct the \emph{\textbf{Pairwise Closed Neighborhood Shortest Path Tree}} $T'$ of $r\cup V_R$ on $G$
\State $|N[T]|=|N[T]|+|N[T']|$
\State $L(T)=L(T)+L(T')$
\State 
\Return $L(T)$ and $|N[T]|$ 
\end{algorithmic}
\end{algorithm}

\section{Simulation}
\label{simulation}

\subsection{Simulation Setup}

In this section, we evaluate the performance of TSSR through simulations conducted on the topology and multicast requests, as outlined in Table \ref{tab:Table 1}. It conducts comparisons with Python NetworkX default heuristic approaches of Steiner Tree (ST) that do not consider interference. Assume the root is always $N_1,$ each edge has length $1$ and $\alpha=1.$ 

\begin{table}[H]
\caption{SDN Empowered WMN Topologies and User Equipment Multicast Requests }
\label{tab:Table 1}
\centering
\adjustimage{width=5.5cm,valign=c}{Ring-Star}
\begin{tabular}{|c|c|}
	\hline
	\textbf{$i$} & \textbf{$R_i$}\\ \hline
	1          & {[}$\mathcal{F}_1, \mathcal{F}_2${]}\\ \hline
	2          & {[}$\mathcal{F}_1, \mathcal{F}_3${]}\\ \hline
	3          & {[}$\mathcal{F}_1, \mathcal{F}_4${]}\\ \hline
	4          & {[}$\mathcal{F}_2, \mathcal{F}_3${]}\\ \hline
	5          & {[}$\mathcal{F}_2, \mathcal{F}_4${]}\\ \hline
	6          & {[}$\mathcal{F}_3, \mathcal{F}_4${]}\\ \hline
	7          & {[}$\mathcal{F}_1, \mathcal{F}_2, \mathcal{F}_3${]}\\ \hline
	8          & {[}$\mathcal{F}_1, \mathcal{F}_2, \mathcal{F}_4${]}\\ \hline
	9          & {[}$\mathcal{F}_1, \mathcal{F}_3, \mathcal{F}_4${]}\\ \hline
	10          & {[}$\mathcal{F}_2, \mathcal{F}_3, \mathcal{F}_4${]}\\ \hline
	11          & {[}$\mathcal{F}_1, \mathcal{F}_2, \mathcal{F}_3, \mathcal{F}_4${]}\\ \hline
\end{tabular}
\end{table}

\subsection{Simulation Results}

The simulation and comparison results are summarized in Table \ref{tab:Table 2}, where the length of the multicast tree and the interference for each request are presented. For instance, $R_1$ only involves $F_1$ and $F_2$ which are located at node $N_1$ and $N_2$ respectively, and there is an edge connecting the two nodes. As a result, the multicast tree found by the proposed approach is the same as that by ST. For the interference metric, node $N_1$ has three edges that can interfere with each other during transmission, so has $N_2$. This gives in total of 6 for the interference metric. The comparison of results in Table \ref{tab:Table 2} indicates that the proposed approach is able to construct a multicast tree of the same length, while at the same time producing reduced interference in comparison with ST approach.

\begin{table}[H]
\centering
\caption{Simulation Results}
\begin{tabular}{|c|c|c|}
	\hline
	\textbf{Algorithm}  & \textbf{Length for $R_i$} & \textbf{Interference for $R_i$} \\ \hline
	TSSR   & [1,2,3,1,2,1,2,3,3,2,3] & [6,8,8,6,7,5,8,8,8,7,8]  \\ \hline
	ST     & [1,2,3,1,2,1,2,3,3,2,3] & [6,8,10,6,7,5,8,12,12,7,8]  \\ \hline
	\end{tabular}
 \adjustimage{width=4cm,valign=c}{Length}
	\adjustimage{width=4cm,valign=c}{Interference}
\label{tab:Table 2}
\end{table}

\section{Conclusion}
\label{con}
In this paper, we introduced an original approach for multicast in wireless mesh networks, which not only supports network functions implemented at different nodes for the multicast service but also considers interference. The bicriteria minimum interference Steiner tree (MIST) formulation, along with the demonstrated properties and algorithmic design techniques, contributes to a comprehensive understanding of the problem. These provide valuable insights for future studies in this research direction. In the paper, to ease the representation and simplify the analysis, we have focused on one multicast request at a time and assumed a fixed access point serving as the root of the Steiner tree. For future research, these assumptions may be relaxed, e.g., including geographical proximity rules and handling multiple multicast orchestrations.

\paragraph*{Acknowledgments} The authors thank Dr.~Benjamin Raichel for his insightful discussions regarding the non-metric properties of the two objectives.

\bibliographystyle{IEEEtran}
\bibliography{reference.bib}

\end{document}